\documentclass[12pt]{amsart}

\usepackage[usenames]{color}

\usepackage{amsmath}
\usepackage{amsthm}
\usepackage{amsfonts}
\usepackage{amssymb}
\usepackage{latexsym}
\usepackage{color}
\usepackage{threeparttable}
\usepackage{seqsplit}

\usepackage{hyperref}
\usepackage[top=1in, bottom=0.89in, left=1in, right=1in]{geometry}

\newtheorem{problem}{Problem}

\newtheorem{proposition}{Proposition}

\newcommand{\Z}{\mathbb{Z}}

\newcommand{\Q}{\mathbb{Q}}

\newcommand{\F}{\mathbb{F}}

\newcommand{\R}{\mathbb{R}}

\begin{document}

\author[]{Vladimir Shpilrain}
\address{Department of Mathematics, The City College of New York, New York,
NY 10031} \email{shpilrain@yahoo.com}

\author[]{Bianca Sosnovski}
\address{Department of Mathematics and Computer Science, Queensborough Community College, City University of New York, Bayside, NY, 11364}
\email{bsosnovski@qcc.cuny.edu}

\title{Cayley hashing with cookies}

\begin{abstract}
Cayley hash functions are based on a simple idea of using a pair of
semigroup elements,  $A$ and  $B$, to hash the 0 and 1 bit,
respectively, and  then to hash an arbitrary bit string in the
natural way, by using multiplication of elements in the semigroup. The main advantage of Cayley hash functions compared to, say, hash functions in the SHA family is that when an already hashed document is amended, one does not have to hash the whole amended document all over again, but rather hash just the amended part and then multiply the result by the hash of the original document. Some authors argued that this may be a security hazard, specifically that this property may facilitate finding a second preimage by splitting a long bit string into shorter pieces. In this paper, we offer a way to get rid of this alleged disadvantage and keep the advantages at the same time. We call this method  ``Cayley hashing with cookies" using terminology borrowed from the theory of random walks in a random environment. For the platform semigroup, we use $2\times 2$ matrices over $\F_p$.

\end{abstract}

\maketitle

\section{Introduction}

Hash functions are easy-to-compute compression functions  that take
a variable-length input and convert it to a fixed-length output.
Hash functions are used as compact representations, or digital
fingerprints, of data and to provide message integrity. Basic security
requirements are well known:

\begin{enumerate}

\item {\it Collision resistance}: it should
be computationally infeasible to find two different inputs that hash
to the same output.

\item {\it Preimage resistance} (sometimes called {\it non-invertibility}): it should be computationally
infeasible to find an input which hashes to a specified output.

\item {\it Second preimage resistance}: it
should be computationally infeasible to find a second input that
hashes to the same output as a specified input.

\end{enumerate}

A challenging problem is to determine mathematical properties of a
hash function that would ensure (or at least, make it likely) that
the requirements  above are met.

A direction that has been gaining momentum lately is using a pair of
elements, $A$ and $B$, of a semigroup $S$, to hash the ``0" and the ``1" bit, respectively. Then a bit string is hashed to a product of elements in the natural way.
For example, the bit string 1001011 will be hashed to  the element $BAABABB$.

Since hashing a random bit string this way represents a random walk on the Cayley
graph of the subsemigroup of $S$ generated by the elements $A$ and $B$, hash functions of this kind are often called {\it Cayley hash functions}. Note that the absence of short collisions for a Cayley hash function is equivalent to the corresponding Cayley graph having a large {\it girth}. The latter is defined as the length of the shortest circuit.

Cayley hash functions have a homomorphic property $H(XY)=H(X)H(Y)$ and the
associativity property $H(XYZ)=H(XY)H(Z) = H(X)H(YZ)$ for any bit
strings $X, Y, Z$. (Here $XY$ means concatenation of the bit strings $X$
and $Y$.) This property is useful not only because it allows for parallel computations  when hashing a long bit string. A more important feature is: when an already hashed document is amended, one does not have to hash the whole amended document all over again, but rather hash just the amended part and then multiply the result by the hash of the original document. On the flip side, this property may to some extent facilitate finding a second preimage by splitting a long bit string into shorter pieces.

Another useful property of a Cayley hash function is that, unlike with a SHA hash function, you do not have to know the length of a bit string to be hashed up front; you can hash ``as you go".

Needless to say, while the high-level idea of Cayley hashing is definitely appealing, the choice of the platform semigroup $S$ and two elements $A, B \in S$ is crucial for security and efficiency. There have been many proposals based on matrix semigroups in $GL_2(\F)$ for various fields $\F$, in particular for $\F=\F_p$. This is because Cayley graphs of 2-generator semigroups in $GL_2(\F_p)$ tend to have a large girth as was shown by several authors, see e.g. \cite{BG}, \cite{BSV}, \cite{H}, \cite{Larsen}.

Cayley graphs of (semi)groups in $GL_n(\F_p)$ with $n >2$ have been considered, too (see \cite{Jo}, \cite{AB}, \cite{Battarbee}), but we will focus here on $n =2$ for the reasons outlined in our Section \ref{growth}; one obvious reason is a smaller size of the hash. For example, if $p$ is a 256-bit prime, then any matrix from $GL_2(\F_p)$ has size of up to 1024 bits, which is common for standard hash functions these days, e.g. for the SHA family.

The novel contribution of the present paper is introducing what we call ``Cayley hashing with cookies", the terminology borrowed from the theory of random walks in a random environment, see e.g. \cite{BS08},  \cite{Solomon75}. We argue that this enhancement does not affect the collision resistance property, and at the same time makes the hash function more preimage resistant. The homomorphic property is ``almost preserved", i.e., is preserved upon minor padding. The corresponding hashing protocol is described in Section \ref{cookies}, and the girth of the relevant Cayley graph is discussed in Section \ref{growth}. Efficiency is discussed in Section \ref{efficiency}.

\section{Background}\label{previous}

The first proposal of a Cayley hash function was due to Z\'emor \cite{Zemor}.
The matrices used, considered over $\F_p$, were \begin{displaymath}
A = \left(
 \begin{array}{cc} 1 & 1 \\ 0 & 1 \end{array} \right) , \hskip 1cm B = \left(
 \begin{array}{cc} 1 & 0 \\ 1 & 1 \end{array} \right).
\end{displaymath}

This proposal was successfully attacked in \cite{TZ_attack}. Specifically, it was shown that this hash function is not preimage resistant.

The most cited proposal is what has become known as the Tillich-Z\'emor hash function \cite{TZ}. Their matrices were \begin{displaymath}
A = \left(
 \begin{array}{cc} \alpha & 1 \\ 1 & 0 \end{array} \right) , \hskip 1cm B = \left(
 \begin{array}{cc} \alpha & \alpha+1 \\ 1 & 1 \end{array} \right).
\end{displaymath}

\noindent  These matrices are considered over a field defined as
$R=\F_2[x]/(p(x))$, where  $\F_2[x]$ is the ring of polynomials over
$\F_2$,  $(p(x))$ is the ideal of $\F_2[x]$
generated by an irreducible polynomial $p(x)$ of degree $n$
(typically,  $n$ is a prime, $127 \le n \le 170$), and  $\alpha$ is a root of $p(x)$.

The reason for selecting such a ``fancy" field probably was to specifically avoid the attack in \cite{TZ_attack}.

Similar later proposals include \cite{Abdukhalikov}, \cite{Joju}, \cite{Sosnovski},  \cite{Tomkins}. Several attacks (some of them targeted at finding collisions, some targeted at finding a preimage) were suggested over the years \cite{GIMS}, \cite{MT}, \cite{PQ}, \cite{Petit_Rubik}.

Another idea for avoiding short collisions is to use a pair of $2 \times
2$ matrices, $A$ and $B$, over $\Z$  that generate a free semigroup in $GL_2(\Z)$,
and then reduce the entries modulo a large prime $p$ to get matrices
over $\F_p$. Since there cannot be an equality of two different
products of copies of $A$ and $B$ unless at least one of
the entries in at least one of the products is $\ge p$, this gives a
lower bound on the minimum length of bit strings where a collision
may occur.

\section{Cayley hashing with cookies}\label{cookies}

Inspired by the theory of random walks with cookies (see e.g. \cite{BS08},  \cite{Solomon75}), here we introduce a Cayley hash function with cookies.

Let $A$, $B$, and $C$ be $2 \times 2$ matrices. Let $u$ be a bit string of an arbitrary length. Then, to hash $u$, going left to right:

\bigskip

\noindent {\bf 1.} If the current bit is 0, then it is hashed to the matrix $A$.
If the current bit is 1, then it is hashed to the matrix $B$. 

\medskip

\noindent {\bf 2.} If there are three ``1" bits in a row (a ``cookie"), then all following ``1" bits will be hashed to the matrix $C$, until there are three ``0" bits in a row, in which case hashing the ``1" bit is switched back to the matrix $B$.
For example, the bit string 10011110001 will be hashed to the matrix $BAABBBCAAAB$.

\bigskip

\noindent The recommended particular matrices are: $A = \left(
 \begin{array}{cc} 1 & 2 \\ 0 & 1 \end{array} \right) , \hskip .2cm B = \left(
 \begin{array}{cc} 1 & 0 \\ 2 & 1 \end{array} \right), \hskip .2cm C = \left(
 \begin{array}{cc} 2 & 1 \\ 1 & 1 \end{array} \right)$.

\begin{proposition}\label{free} The semigroup generated by the matrices $A$, $B$, and $C$ over $\Z$ is free.

\end{proposition}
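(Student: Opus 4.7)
\emph{Proof plan.} The plan is to apply a ping-pong argument on the projective line, viewing the three matrices as M\"obius transformations of slopes. For a matrix $M = \bigl(\begin{smallmatrix} a & b \\ c & d \end{smallmatrix}\bigr)$ acting on a column vector of slope $s = y/x$, the image has slope $(c + ds)/(a + bs)$. Hence $A, B, C$ induce the three maps
\[
s \mapsto \frac{s}{1 + 2s}, \qquad s \mapsto s + 2, \qquad s \mapsto \frac{1 + s}{2 + s},
\]
respectively. Restricted to the open interval $I = (0, \infty)$ each map is strictly increasing, so one reads off the images $I_A = (0, 1/2)$, $I_B = (2, \infty)$, and $I_C = (1/2, 1)$. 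The key geometric observation is that $I_A, I_B, I_C$ are pairwise disjoint subintervals of $I$.

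Next I would carry out the ping-pong induction. For any nonempty word $W = M_1 M_2 \cdots M_k$ with each $M_i \in \{A, B, C\}$ and any $s_0 \in I$, I claim $W(s_0) \in I_{M_1}$. Indeed the tail $M_2 \cdots M_k$ sends $s_0$ into $I$ (when $k = 1$ the tail is empty and the value is $s_0$ itself; for $k \geq 2$ the value lies in $I_{M_2} \subseteq I$ by the inductive hypothesis), and then $M_1$ sends the result into $M_1(I) = I_{M_1}$. Suppose now two nonempty words $W_1 = M_1 \cdots M_k$ and $W_2 = N_1 \cdots N_l$ represent the same matrix. Applying both to a fixed $s_0 \in I$ yields equal slopes lying in $I_{M_1} \cap I_{N_1}$, so $M_1 = N_1$ by disjointness. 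Left-multiplying by $M_1^{-1} \in GL_2(\Q)$ (available because $\det A = \det B = \det C = 1$) strips off the leading letter on both sides, and an outer induction on total length concludes.

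I do not anticipate any substantial obstacle. The one delicate point is that $I_A$ and $I_C$ abut at the single value $1/2$, attained only as a one-sided limit ($s \to \infty$ under $A$, $s \to 0$ under $C$), so it is essential to work with the open interval $(0, \infty)$ rather than its closure. A second minor caveat is that in the inductive step one of the two sides may reduce to empty while the other remains nonempty; to rule out such an equality with the identity matrix I would pick $s_0 \in I \setminus (I_A \cup I_B \cup I_C)$, say $s_0 = 3/2$, so that the identity fixes $s_0$ whereas any nonempty word moves it into one of $I_A, I_B, I_C$.
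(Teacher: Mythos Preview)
Your ping-pong argument is correct: the three image intervals $I_A=(0,1/2)$, $I_B=(2,\infty)$, $I_C=(1/2,1)$ are genuinely disjoint open subintervals of $(0,\infty)$, the induction on word length is set up properly, and your choice $s_0=3/2\in I\setminus(I_A\cup I_B\cup I_C)$ cleanly disposes of the empty-versus-nonempty case.

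The paper, however, takes a different and shorter route. It observes that $A=X^2$, $B=Y^2$, $C=XY$ where $X=\bigl(\begin{smallmatrix}1&1\\0&1\end{smallmatrix}\bigr)$ and $Y=\bigl(\begin{smallmatrix}1&0\\1&1\end{smallmatrix}\bigr)$ generate a free semigroup, and then notes that the three words $X^2,\,Y^2,\,XY$ form a prefix-free set in the free monoid on $\{X,Y\}$; hence any semigroup relation among $A,B,C$ would force a relation among $X,Y$. So the paper reduces to a well-known freeness result plus a one-line combinatorial observation, whereas your argument is fully self-contained and exhibits the underlying geometry explicitly. The paper's approach is more economical and also suggests immediately how to manufacture further free triples (take any prefix-free triple of words in $X,Y$); your approach has the virtue of not relying on any cited fact and of making the disjointness of orbits visible, which is closer in spirit to how one proves freeness of $\langle X,Y\rangle$ in the first place.
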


\begin{proof}
Denote $X = \left(
 \begin{array}{cc} 1 & 1 \\ 0 & 1 \end{array} \right) , \hskip .2cm Y = \left(
 \begin{array}{cc} 1 & 0 \\ 1 & 1 \end{array} \right).$

It is well known that  $X$ and $Y$ generate  a free semigroup.
Then, $A = X^2, B=Y^2, C=XY$. Note that none of the three words  $X^2, Y^2, XY$ is a prefix of any other. In that case, it is known (and easy to see) that there are no semigroup relations between such words.

\end{proof}

Thus, if our matrices $A, B, C$ are considered over $\F_p$, there cannot be any collisions in the corresponding hash function unless a bit string that is hashed is long enough for at least one of the entries in a product of matrices to become larger than $p$. This is why it is important to determine the growth of the largest entry in a product of $n$ matrices, as a function of $n$. This is what our Section \ref{growth} is about.

\subsection{Padding}\label{Padding} To preserve the useful homomorphic property $H(XY)=H(X)H(Y)$ of the hash function $H$ (see the Introduction), in our situation one has to do a minor padding of any bit string to be hashed. Specifically, three zeros would have to be added at the end of each bit string to be hashed. Because of the rules at Step 2 of the hashing protocol, this will reset hashing elements to the original pair $(A, B)$ of matrices.

\section{Efficiency}\label{efficiency}

With the particular choice of matrices $A, B, C$ as in the previous Section \ref{cookies}, computation of the hash is very efficient. Indeed, computing the hash $H(u)$ of a given bit string $u$ of length $n$ takes $(n-1)$ matrix multiplications where each time one multiplies by one of the matrices $A$, $B$, or $C$.

Now we note that in any of the matrices $A, B, C$ one of the entries is 2 and other entries are 0 or 1. When multiplying by a matrix like that, we do not actually have to do any multiplications of numbers since multiplying a number $x$ by 2 is the same as adding $x$ to itself.

With this in mind, we see that multiplying by any of the matrices $A$, $B$, or $C$  requires 5 additions of numbers. Therefore, computing $H(u)$ requires no multiplications and  $5(n-1)$ additions in $\F_p$.

\section{Growth}\label{growth}

In this section, we address the following general problems about the growth of the maximal entry in a product of $n$ matrices. The motivation is: the slower the growth, the longer are minimal collisions in the corresponding hash function, as explained at the end of Section \ref{previous}, as well as at the end of Section \ref{cookies}.

In what follows, $A, B, C$ are matrices over $\Z$.

\begin{problem}\label{maximum} What is the maximal possible entry of a matrix $w(A, B, C)$, as a function of the word length $n=|w|$, over all words $w$ of length $n$?
\end{problem}

\begin{problem}\label{generic}
What is the maximal entry of a matrix $w(A, B, C)$, where $w$ is a generic word of length $n$?
\end{problem}

The problem with Problem \ref{generic} in reference to our situation in Section \ref{cookies} is that it is not easy to formalize ``generic" since the probability of matrices $B$ or $C$ appearing in any particular place in a product of $n$ matrices is difficult to estimate. However, Problem \ref{generic} can be studied experimentally.

The growth of entries in 3-generator semigroups of matrices has not been studied before, to the best of our knowledge. By comparison, the 2-generator case has been studied fairly well.
Below we summarize what is known about the growth of entries in 2-generator semigroups of matrices over $\Z$, i.e., in matrices of the form  $w(A, B)$ for various popular instances of $A$ and $B$.

Denote $A(k) = \left(
 \begin{array}{cc} 1 & k \\ 0 & 1 \end{array} \right) , \hskip .2cm B(m) = \left(
 \begin{array}{cc} 1 & 0 \\ m & 1 \end{array} \right)$.

\medskip

\noindent {\bf 1.} In \cite{BSV}, it was proved that the maximum growth in products of $n$ matrices of the form $w(A(k), B(k))$ for integers $k \ge 1$ is achieved by the words $w = (A(k)B(k))^{\frac{n}{2}}$ (assuming that $n$ is even).

\medskip

\noindent {\bf 2.} As one would expect, growth of the entries in matrices $w(A(1), B(1))$ is the slowest among all $w(A(k), B(k))$ for integers $k \ge 1$. The largest entries in the corresponding matrices $(A(1)B(1))^{\frac{n}{2}}$ are $O((\frac{3}{2}+\frac{\sqrt{5}}{2})^n)$. Note that $\frac{3}{2}+\frac{\sqrt{5}}{2} \approx 1.618$.

\medskip

\noindent {\bf 3.} For matrices $w(A(2), B(2))$, the largest entries are in the matrices $(A(2)B(2))^{\frac{n}{2}}$, and their magnitude is $O((1+\sqrt{2})^n)$, see \cite{BSV}. This implies that the girth of the corresponding Cayley graph (over $\F_p$) is $O(\log_{1+\sqrt{2}} n)$. Note that $1+\sqrt{2} \approx 2.41$. Incidentally, this is the best known lower bound for the girth in this particular case. In fact, in \cite{BSV} an exact expression for the largest entries was obtained: $(\frac{1}{2}+\frac{1}{\sqrt{8}})(1+\sqrt{2})^n +
(\frac{1}{2}-\frac{1}{\sqrt{8}})(1-\sqrt{2})^n$.

This implies, in particular, that if $p$ is on the order of $2^{256}$,  then there
are no collisions of the form  $u(A(2), B(2)) = v(A(2), B(2))$ if both the
words $u$ and  $v$ are of length less than $203 \approx \log_{2.41}2^{256} = 256 \log_{2.41}2$.

We also note that up to date, there have been no successful attacks reported against the hash function based on the matrices $A(2)$ and $B(2)$.
\medskip

\noindent {\bf 4.} The pair of matrices $A=A(2)$ and $B=B(-2)$ give the minimum growth rate of the largest entry in $w(A(k), B(m))$ among all $|k|, |m| \ge 2$. According to computer experiments, the largest (by the absolute value) entries occur in $(ABBA)^{\frac{n}{4}}$, and these entries are $O((\sqrt{2+\sqrt{3}})^n)$. Note that $\sqrt{2+\sqrt{3}} \approx 1.93$.

This implies, in particular, that if $p$ is on the order of $2^{256}$,  then there
are no collisions of the form  $u(A(2), B(-2)) = v(A(2), B(-2))$ if both the
words $u$ and  $v$ are of length less than $269 \approx  256 \log_{1.93}2$.
\medskip

\noindent {\bf 5.} Generically, i.e., in a random product of length $n$ of the matrices $A(2)$ and $B(2)$ (where each factor is $A(2)$ or $B(2)$ with probability $\frac{1}{2}$), the largest entry grows approximately as $(1.9)^n$. This was determined experimentally, by averaging over 1000 products of 1000 matrices.
\medskip

\noindent {\bf 6.} Generically, the largest entry in a random product of length $n$ of the matrices $A(2)$ and $B(-2)$ grows approximately as $(1.68)^n$. Again, this was determined experimentally, by averaging over 1000 products of 1000 matrices.

\subsection{Growth in $w(A, B, C)$}\label{growth3}
Now we get to the growth questions (Problems \ref{maximum} and \ref{generic})
that are relevant to our particular Cayley hash function from Section \ref{cookies}.
Recall that in our situation $A = X^2, ~B=Y^2, ~C=XY$, where $X = \left(
 \begin{array}{cc} 1 & 1 \\ 0 & 1 \end{array} \right) , \hskip .2cm Y = \left(
 \begin{array}{cc} 1 & 0 \\ 1 & 1 \end{array} \right).$

\begin{proposition}\label{fastest} The bit string $11111\ldots  $ yields hash matrices with the fastest growing entries, among hash matrices of all bit strings of the same length.

\end{proposition}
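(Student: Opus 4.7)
My plan is to work inside the free monoid generated by
\[
X=\begin{pmatrix}1&1\\0&1\end{pmatrix}, \qquad Y=\begin{pmatrix}1&0\\1&1\end{pmatrix},
\]
exploiting the identifications $A=X^{2}$, $B=Y^{2}$, $C=XY$ from the proof of Proposition \ref{free}. A bit string of length $n$ therefore hashes to a word of length exactly $2n$ in $\{X,Y\}$, and the proof splits into three steps.

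Step 1: Identify $h(1^{n})$ explicitly. By the cookie rule the first three $1$-bits each hash to $B$ and every subsequent $1$-bit hashes to $C$, so
\[
h(1^{n}) \;=\; B^{3}C^{\,n-3} \;=\; Y^{6}(XY)^{n-3}.
\]
The matrix $XY$ has $\det=1$ and $\mathrm{tr}=3$, hence eigenvalues $\phi^{2}$ and $\phi^{-2}$ with $\phi=(1+\sqrt5)/2$; its powers carry Fibonacci entries, so after left-multiplying by the constant matrix $Y^{6}$ the largest entry of $h(1^{n})$ turns out to be $\Theta(\phi^{2n})$.

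Step 2: Prove a matching universal upper bound: every length-$\ell$ word in $\{X,Y\}$ has maximal entry at most the Fibonacci number $F_{\ell+1}$ (with $F_{1}=F_{2}=1$). Let $\mu(\ell)$ denote this maximum and $R(\ell)$ the analogous maximum over row sums. A right-multiplication by $X$ or by $Y$ replaces one column of the running matrix by the sum of its two columns, so the entries of the new column are exactly the two row sums of the previous matrix; this gives $\mu(\ell)\le R(\ell-1)$. The new row sums equal the previous row sums augmented by a single previous entry, giving $R(\ell)\le R(\ell-1)+\mu(\ell-1)$. Starting from $\mu(0)=R(0)=1$, the coupled induction closes at $\mu(\ell)\le F_{\ell+1}$ and $R(\ell)\le F_{\ell+2}$. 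Specializing to $\ell=2n$ bounds $\|h(u)\|_{\infty}$ for every bit string $u$ of length $n$ by $F_{2n+1}=\Theta(\phi^{2n})$.

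Step 3: Combine. The hash $h(1^{n})$ attains the universal per-letter growth rate $\phi$ (equivalently $\phi^{2}$ per bit), so no bit string of length $n$ can yield hash entries with a faster growth rate.

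I expect the Fibonacci bound in Step 2 to be the main obstacle. A direct induction on $\mu$ alone only yields $\mu(\ell)\le 2\mu(\ell-1)$, and hence the useless bound $2^{\ell}=4^{n}$; coupling $\mu$ with the row-sum quantity $R$ is what forces the per-letter growth factor down from $2$ to $\phi$. A secondary nuisance is that the universal upper bound is realized by $(XY)^{n}$, which is \emph{not} itself the hash of any bit string (the cookie rule forbids starting with a $C$-block); showing that the $Y^{6}$-prefix in $h(1^{n})$ costs only a multiplicative constant, and hence does not degrade the exponent, is what makes $1^{n}$ actually optimal.
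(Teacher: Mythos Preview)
Your plan is correct and, at heart, follows the same reduction as the paper: translate a bit string of length $n$ into a word of length $2n$ in $X,Y$ via $A=X^{2}$, $B=Y^{2}$, $C=XY$, observe that $h(1^{n})=Y^{6}(XY)^{n-3}$ is (up to a bounded prefix) the alternating word, and use that the alternating word maximizes entry growth among all $X,Y$-words of a given length.

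The difference lies in how that last fact is handled. The paper simply invokes it as known, citing item~(1) of Section~\ref{growth} (the result from \cite{BSV} that $(A(k)B(k))^{n/2}$ maximizes growth, specialized to $k=1$), and then argues informally that inserting any $0$-bits destroys the alternation and hence slows growth. You instead prove the extremal statement from scratch, via the coupled induction on the maximal entry $\mu(\ell)$ and the maximal row sum $R(\ell)$, obtaining the clean Fibonacci bound $\mu(\ell)\le F_{\ell+1}$. This buys you a fully self-contained argument and a sharp explicit upper bound; the paper's version is shorter but leans on an external reference. You are also more explicit than the paper about why the non-alternating $Y^{6}$-prefix in $h(1^{n})$ costs only a multiplicative constant and therefore does not affect the exponent---a point the paper dismisses with ``neglecting the $B^{3}$ factor''.
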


\begin{proof}
The hash matrix of such a bit string of length $n$ is $B^3C^{n-3}$. Neglecting the $B^3$ factor, what we have here is powers of $C=XY$, so powers of $C$ are alternating products of the matrices $X$ and $Y$. The latter are known to give the fastest growth among all $w(X, Y)$, see item (1) in Section \ref{growth}.

Throwing in some 0 bits in this bit string will result in throwing in some $X^2$ matrices in the matrix product, and then we will have subfactors like $XYX^2$ and/or $X^2XY=X^3Y$. In either case, $X$ and $Y$ will no longer be alternating in a product, so the growth of the entries in a product matrix will be slower.

Alternatively, if, after throwing in some 0 bits we get three zeros in a row, hashing the 1 bit will be switched to the matrix $B=Y^2$, and then we will have subfactors that are products of matrices $X^2$ and $Y^2$, so again $X$ and $Y$ will not be alternating in such a product, so again the growth of the entries in a product matrix will be slower than it is in the matrices that hash a sequence of 1 bits only.
\end{proof}

Thus, the maximum growth is that of the entries of the matrices $C^n=(XY)^n$, and this is known to be on the order of
$O((\frac{3}{2}+\frac{\sqrt{5}}{2})^{2n}) = O((\frac{7}{2}+\frac{3}{2}\sqrt{5})^{n})$. Note that $\frac{7}{2}+\frac{3\sqrt{5}}{2} \approx 2.618$.

This implies that there are no collisions $H(u)=H(v)$ in our hash function $H$ if both
bit strings $u$ and  $v$ are of length less than $\log_{2.618} p$. In particular,
if $p$ is on the order of $2^{256}$,  then there are no collisions if both
bit strings $u$ and  $v$ are of length less than $184 = 256 \log_{2.618}2$.

\section{Collision and preimage resistance} \label{security}

Collision resistance claims for Cayley hash functions are typically based on satisfactory lower bounds for the girth of the relevant Cayley graph. Our lower bound is logarithmic in $p$, see Section \ref{growth3}, which is consistent with other proposals of Cayley hash functions (see e.g. \cite{Jo}, \cite{AB},   \cite{Joju}, \cite{Battarbee},  \cite{Sosnovski},  \cite{Tomkins}) that use matrices over the field $\F_p$ or its extensions. Of course, the base of the logarithm matters, too, which is why more specific lower bounds on the girth (as in our Section \ref{growth}) are important.

Our method of estimating the girth is described at the end of Section \ref{cookies}. It gives very good results in some cases; in particular, for the girth of the Cayley graph corresponding to the pair of matrices $(A(2), B(2))$ (see Section \ref{growth}) our lower bound is tighter than lower bounds obtained by other authors (\cite{BG}, \cite{H}, \cite{Larsen}). However, this has a flip side: unless some of the entries in a matrix $w(A(2), B(2))$ are larger than $p$, this $w(A(2), B(2))$ is an element of the free semigroup in $SL_2(\Z)$ generated by $A(2)$ and $B(2)$. In that case, there is an efficient algorithm \cite{Geller} that recovers the word $w$, i.e., in the context of the corresponding Cayley hash function it recovers a preimage of the hash. The reason why this algorithm is efficient is that, given a matrix $W=w(A(2), B(2))$, exactly one of the matrices $WA^{-1}$ and  $WB^{-1}$ has a smaller sum of the absolute values of the entries than the matrix $W$ does.

With our hash function, this algorithm typically will not be feasible even in case of relatively short bit strings (of a couple of hundred bits). This is because in our situation it is not true that, given a matrix $W=w(A, B, C)$, exactly one of the matrices $WA^{-1}$, $WB^{-1}$, and  $WC^{-1}$ has a smaller sum of the absolute values of the entries than the matrix $W$ does. Therefore, at least at some steps the attacker would have to explore more than one option, so the number of steps can be exponential in the number of bits in a bit string that the attacker wants to recover.

With a brute force attack (trying out, one at a time, all $2^n$ bit strings of length $n$ until a preimage is found), the number of trials is on the order of $2^n$. 
Thus, for preimage security, the length of a bit string to be hashed has to be at least $t$, where $t$ is the security parameter. Currently, it is recommended that $t \ge 256$.

We also mention that to date, there were no successful attacks reported against the Cayley hash function, call it $H_1$, based on the two matrices $A = \left(
 \begin{array}{cc} 1 & 2 \\ 0 & 1 \end{array} \right) , \hskip .2cm B = \left(
 \begin{array}{cc} 1 & 0 \\ 2 & 1 \end{array} \right)$, see \cite{BSV}. The Cayley hash function in the present paper, call it $H_2$, based on the matrices $A, B$, and $C = \left(\begin{array}{cc} 2 & 1 \\ 1 & 1 \end{array} \right)$, is at least as preimage resistant as $H_1$ is, in the following sense: if there is an algorithm (deterministic or not) for recovering preimage of $H_2(u)$ for any bit string $u$ of length $n$, then the same algorithm will recover preimage of $H_1(u)$ for any bit string $u$ of length $n$. This is because if a bit string $u$ does not include a substring of three ``1" bits in a row, then $H_1(u)$ is just the same as $H_2(u)$.

\section{Suggested parameters}\label{parameters}

\noindent For $p$ in $\F_p$, we suggest a 256-bit prime.

\noindent For matrices that hash individual bits, we suggest $A = \left(
 \begin{array}{cc} 1 & 2 \\ 0 & 1 \end{array} \right) , \hskip .2cm B = \left(
 \begin{array}{cc} 1 & 0 \\ 2 & 1 \end{array} \right), \hskip .2cm C = \left(
 \begin{array}{cc} 2 & 1 \\ 1 & 1 \end{array} \right)$.

\section{NIST statistical test suite results}
\label{NIST}

A hash function should generate outputs as random as possible. We applied the NIST Statistical Test Suite \cite{NIST} to evaluate the randomness of the outputs in binary form from the proposed hash function.

The NIST Statistical Test Suite is a package that includes 15 types of tests, each with a suitable metric needed to investigate the degree of randomness for binary sequences produced by cryptographic random generators.  In these tests, a set of statistical tests for randomness are used for detecting deviations of a binary sequence from randomness.

Even though no statistical test can certify if a (pseudo)random generator is suitable for usage in a specific cryptographic application,  the NIST tests may be useful as a first step in that direction. 

The NIST statistical tests are formulated to test the null hypothesis $H_0$ that the sequence being tested is random. Thus, the alternative hypothesis $H_A$ is that the sequence being tested is non-random. For each NIST test and sequence tested, a test statistic value is calculated from the sample of bits.

The $P$-value (or probability value) is the probability of getting a sample statistic with the test value or a more extreme sample statistic in the direction of the alternative hypothesis $H_A$ under the assumption that the null hypothesis $H_0$ is true. It is a measure of strength of the evidence against the null hypothesis (randomness). Specifically, if the $P$-value is $\geq \alpha$ (NIST suite has $ \alpha$ set to $0.01=1\%$), the conclusion is that the sequence is random, otherwise, it is non-random.


The following are deviations from randomness that each test in the NIST Suite detects  in binary sequences:

\begin{itemize}
\item \emph{Frequency test}  -  Too many zeroes or ones.
\item \emph{Block frequency test}  -  Too many zeros or ones within a block
\item \emph{Runs test } -  Large (small) total number of runs indicates that the oscillation in the bit string is too fast (too slow).
\item \emph{Longest runs of ones test} - Deviation of the distribution of long runs of ones.
\item \emph{Rank test }-  Deviation of the rank distribution from a corresponding random sequence, due to periodicity.
\item \emph{Discrete Fourier Transform (spectral) test} -  Periodic features in the bit stream.
\item \emph{Non-overlapping template matchings test} - Too many occurrences of non-periodic templates.
\item \emph{Overlapping template matchings test }- Too many occurrences of $m$-bit runs of ones.
\item \emph{Universal statistical test} -  Compressibility (regularity).
\item \emph{Linear complexity test} -  Deviation from the distribution of the linear complexity for finite length (sub)strings.
\item \emph{Serial test} -  Non-uniform distribution of $m$-length words. Similar to the approximate entropy test.
\item \emph{Approximate entropy test}  -  Non-uniform distribution of $m$-length words. Small values of ApEn(m) imply strong regularity.
\item \emph{Cumulative sums test }- Too many zeroes or ones at the beginning of the sequence.
\item \emph{Random excursions test}  -  Deviation from the distribution of the number of visits of a random walk to a certain state.
\item \emph{Random excursion variant test }-  Deviation from the distribution of the total number of visits (across many random walks) to a certain state.
\end{itemize}

For each test, there is a recommended minimum size for the binary streams being tested \cite{RSN}.  If one wishes to apply all the tests in the suite, a minimum of $10^{6}$ in length is recommended for the binary strings tested. Also, the $P$-values processed by the NIST tests use approximation, so the more sequences are tested the more accurate results will be obtained. 

We used SageMath \cite{Sage} to generate the hash values in binary form. Random primes of order $2^{256}$ and $2^{512}$ were generated for the modulos of the hash function and also random binary strings as inputs of length $10^{6}$ bits. These inputs were padded with 000 to reset the matrices (see our Section \ref{Padding}), and finally, the corresponding matrix products were calculated.  The hash values consist of the concatenated matrix entries in binary form with lengths of 1024 and 2048 bits, respectively.
We have analyzed data for the modulus $p$ of the order $2^{256}$ and $2^{512}$. 

Tables \ref{table-nist1}  and \ref{table-nist2} present the statistical properties of the hash values as reported by the NIST test suite, obtained after processing 100 binary sequences of length $10^6$.

\begin{threeparttable}
\centering \caption{NIST Suite Results for a Prime of Order $2^{256}$}
\label{table-nist1}
  \begin{tabular}{c| l | c | c  }
  \hline \hline
Number & Statistical test & $P$-value   & Pass rate \\ \hline
\hline 1 & Frequency     & 0.678686  & 98/100  \\ \hline
2 & Block frequency  &  0.003447 &   100/100            \\ \hline
3 & Cumulative sums 1  &  0.224821 &    98/100    \\ \hline
4 & Cumulative sums 2  & 0.719747  &    97/100    \\ \hline
5 & Runs   &    0.021999  &  100/100         \\ \hline
6 & Longest runs of ones   &   0.289667 &  99/100   \\ \hline
7 & Rank  &   0.935716  &    99/100    \\ \hline
8 & FFT  &   0.045675   &    98/100   \\ \hline
9..156 & Non-overlapping templates  &  0.471367\tnote{1 }  {\small  \ (mean)}    &   98.92/100\tnote{2 } {\small  \ (mean)}    \\ \hline
157 & Overlapping template   &    0.304126  &    100/100   \\ \hline
158 & Universal  &  0.657933    &     99/100     \\ \hline
159 & Approximate entropy  &  0.224821  &   98/100    \\ \hline
160..167 & Random excursions  &   0.533178\tnote{3 } {\small (mean)}   &   50.75/51\tnote{4 } {\small (mean)}   \\ \hline
168..185 & Random excursions variant &  0.344685\tnote{5 } {\small (mean)}    &  50.78/51\tnote{6 } {\small (mean)}    \\ \hline
186 & Serial 1 & 0.514124  &  98/100     \\ \hline
187 & Serial 2 &  0.401199   &    99/100   \\ \hline
188 &Linear complexity &   0.249284   & 99/100     \\ \hline
\end{tabular}
\begin{tablenotes}
\item[1] 148 tests with a minimum of 0.006196 and a maximum of 0.99425.
\item[2] 148 tests with a minimum of 96 and a maximum of 100.
\item[3] 8 tests with a minimum of 0.032923 and a maximum of 0.964295.
\item[4] 8 tests with a minimum of 50 and a maximum of 51.
\item[5] 18 tests with a minimum of 0.048716 and a maximum of 0.719747.
\item[6] 18 tests with a minimum of 50 and a maximum of 51.
\end{tablenotes}
\end{threeparttable}

\begin{threeparttable}
\centering \caption{NIST Suite Results for a Prime of Order $2^{512}$}
\label{table-nist2}
  \begin{tabular}{c| l | c | c  }
  \hline \hline
Number & Statistical test & $P$-value   & Pass rate \\ \hline
\hline 1 & Frequency     & 0.071177  & 98/100  \\ \hline
2 & Block frequency  &  0.202268 &   99/100            \\ \hline
3 & Cumulative sums 1  &  0.304126 &    97/100    \\ \hline
4 & Cumulative sums 2  &  0.224821  &    97/100    \\ \hline
5 & Runs   &    0.759756  &  98/100         \\ \hline
6 & Longest runs of ones   &   0.366918 &  100/100   \\ \hline
7 & Rank  &   0.090936  &    100/100    \\ \hline
8 & FFT  &   0.798139   &    99/100   \\ \hline
9..156 & Non-overlapping templates  &  0.518710\tnote{1 }  {\small  \ (mean)}    &   99.09/100\tnote{2 } {\small  \ (mean)}    \\ \hline
157 & Overlapping template   &    0.637119  &    96/100   \\ \hline
158 & Universal  &  0.23681    &     100/100     \\ \hline
159 & Approximate entropy  &  0.062821  &   99/100    \\ \hline
160..167 & Random excursions  &   0.680366\tnote{3 } {\small (mean)}   &   61.38/62\tnote{4 } {\small (mean)}   \\ \hline
168..185 & Random excursions variant &  0.394883\tnote{5 } {\small (mean)}    &  61.5/62\tnote{6 } {\small (mean)}    \\ \hline
186 & Serial 1 & 0.334538  &  97/100     \\ \hline
187 & Serial 2 &  0.678686   &    100/100   \\ \hline
188 &Linear complexity &   0.249284   & 96/100     \\ \hline
\end{tabular}
\begin{tablenotes}
\item[1] 148 tests with a minimum of 0.004301 and a maximum of 0.996335.
\item[2] 148 tests with a minimum of 96 and a maximum of 100.
\item[3] 8 tests with a minimum of 0.437274 and a maximum of 0.862344.
\item[4] 8 tests with a minimum of 60 and a maximum of 62.
\item[5] 18 tests with a minimum of 0.039244 and a maximum of 0.985035.
\item[6] 18 tests with a minimum of 61 and a maximum of 62.
\end{tablenotes}
\end{threeparttable}

\

\

NIST suggests considering data to be random if and only if the sequence/sequences pass both the uniformity test of $P$-values and the test of the proportion of passing sequences.

According to the NIST documentation, a pass rate of 96\% is acceptable.
This corresponds to a minimum pass rate for each statistical test, except the random excursion (variant) test, which is approximately 96 for a sample size of 100 binary sequences. The minimum pass rate for the random excursion (variant) test is approximately 48 for a sample size of 51 binary sequences with $p$ of order $2^{256}$, and approximately 59 for a sample size of 62 binary sequences with $p$ of order $2^{512}$.

The following primes $p$ were used to generate hash values tested with the NIST
Statistical Test Suite.

\vskip -0.2cm

\begin{itemize}

\item Prime of order $2^{256}$:

Decimal form:
  \seqsplit{11213019353385680997044300082282941457293378
0556534369189742044710202716867171}

%
\medskip

\item Prime of order $2^{512}$:

Decimal form:
\seqsplit{125967099140123813315752220780255508336665456536865562994
12073058759112539196792509169699422775197821869177859263195
184957153059906758380302238329723774073}


\end{itemize}

\section{Conclusions}

$\bullet$ We have proposed a Cayley hash function $H$ that employs random walks (with cookies) on the Cayley graph of a 3-generator (instead of the usual 2-generator) semigroup of $2\times 2$ matrices over $\F_p$.
\medskip

$\bullet$ If $p$ is a 256-bit prime, then the size of $H(u)$ for any bit string $u$ is 1024 bits. If the bit string $u$ has $n$ bits, then computing $H(u)$ (with the recommended choice of matrices $A, B, C$) requires no multiplications and  $5(n-1)$ additions in $\F_p$.
\medskip

$\bullet$ There are provably no collisions $H(u)=H(v)$ in our hash function $H$ if both
bit strings $u$ and  $v$ are of length less than $\log_{_{\frac{7}{2}+\frac{3\sqrt{5}}{2}}} p \approx \log_{2.618} p$. In particular,
if $p$ is a 256-bit prime,  then there are no collisions if both
bit strings $u$ and  $v$ are of length less than $184 = 256 \log_{2.618}2$. If $p$ is a 512-bit prime,  then there are no collisions if both bit strings are of length less than 368, etc. We note that $\log_{_{\frac{7}{2}+\frac{3\sqrt{5}}{2}}} p$ is just a provable  lower bound for the girth of the relevant Cayley graph; the actual girth might be much larger.
\medskip

$\bullet$  Our hash function has successfully passed the applicable
pseudorandomness tests in the NIST Statistical Test Suite.


\end{document}